\newtheorem{thm}{Theorem}
\newcommand{\INS}[1]{{\color{black}{#1}}}
\title{\LARGE 
\textbf{Data-driven phase control for limit-cycle oscillators \\ under partial observation}
}
\author{Koichiro Yawata, Norihisa Namura, Yuzuru Kato, and Hiroya Nakao
\thanks{K. Yawata is with the Department of Systems and Control Engineering, 
Institute of Science Tokyo, Tokyo 152-8552, Japan
        {\tt\small koichiro.yawata.rt@gmail.com}}%
\thanks{N. Namura is with the Department of Systems and Control Engineering, 
Institute of Science Tokyo, Tokyo 152-8552, Japan
        {\tt\small namura.n.60d9@m.isct.ac.jp (namura.n.aa@m.titech.ac.jp)}}%
\thanks{Y. Kato is with the Department of Complex and Intelligent Systems, Future University Hakodate, Hokkaido 041-8655, Japan
        {\tt\small katoyuzu@fun.ac.jp}}%
\thanks{H. Nakao is with the Department of Systems and Control Engineering
and Research Center for Autonomous Systems Materialogy, 
Institute of Science Tokyo, Tokyo 152-8552, Japan
        {\tt\small nakao.h.ee74@m.isct.ac.jp (nakao@sc.e.titech.ac.jp)}}%
\thanks{
H.N. acknowledges the support from JSPS KAKENHI 
\INS{(grant numbers: 25H01468, 25K03081, and 22H00516).}
\INS{N.N. acknowledges the financial support from JSPS KAKENHI (No. JP25KJ1270).}
Y.K. acknowledges JSPS KAKENHI JP22K14274 and JST PRESTO JP-MJPR24K3.
}
}
\begin{document}

\maketitle
\thispagestyle{empty}
\pagestyle{empty}

\begin{abstract}

Controlling rhythmic systems, typically modeled as limit-cycle oscillators, is an important subject in real-world problems.
Phase reduction theory, which simplifies the multidimensional oscillator state under weak input to a single phase variable, is useful for analyzing the oscillator dynamics.
In the control of limit-cycle oscillators with unknown dynamics, the oscillator phase should be estimated from time series under partial observation in real time.
In this study, we present an autoencoder-based method for estimating the oscillator phase using delay embedding of observed state variables.
We evaluate the order of the phase estimation error under weak inputs and apply the method to phase-reduction-based feedback control of mutual synchronization of two oscillators under partial observation.
The effectiveness of our method is illustrated by numerical examples using two types of limit-cycle oscillators, the Stuart-Landau and Hodgkin-Huxley models.
\end{abstract}

\section{Introduction}\label{sec:intro}

Spontaneous rhythmic phenomena are widely observed in the real world, and their mechanisms and applications have garnered significant interest~\cite{Winfree1980,Pikovsky2001,Kuramoto2003,Kralemann2013,Stankovski2015}.
Such rhythmic phenomena can be mathematically modeled as limit-cycle oscillators~\cite{Winfree1980, Kuramoto2003, nakao2016phase,Ermentrout2010,monga2019phase,ermentrout2019recent}.
Synchronization of limit-cycle oscillators,
which refers to alignment of their dynamics either by mutual interaction or by unilateral forcing,
is one of the most common phenomena in rhythmic systems.
Control of synchronization is a fundamental research subject in various fields of science and engineering, for example, synchronization in power grids~\cite{Dorfler2012synchronization},
gait transitions in hexapod robots~\cite{Namura2025central},
and suppression of pulsus alternans in the heart~\cite{wilson2017spatiotemporal}. 

The phase reduction theory~\cite{Winfree1980, Ermentrout2010, glass1988clocks,nakao2016phase,shirasaka2020phase,wilson2022adaptive}
is useful for analyzing oscillator dynamics under weak inputs 
because it systematically reduces the multidimensional state of the oscillator to a single phase variable that increases with a constant natural frequency along the limit cycle and in its basin of attraction, called the asymptotic phase.
Phase control of limit-cycle oscillators based on phase reduction theory (and 
recently extended phase-amplitude reduction theory~\cite{mauroy2013isostables, wilson2016isostable, shirasaka2020phase}) has been extensively studied,
for example, optimal phase control~\cite{moehlis2006optimal, Monga2019optimal, namura2024optimal2},
optimizing periodic inputs~\cite{Zlotnik2013optimal, Kato2021optimization}, and
optimizing coupling functions~\cite{namura2024optimal}.
Control of mutual synchronization~\cite{namura2024optimal} is also achieved by feedback control,
where the control input is determined from the phase differences between the oscillators.

\begin{figure}[t]
\centering
\includegraphics[scale=0.35]{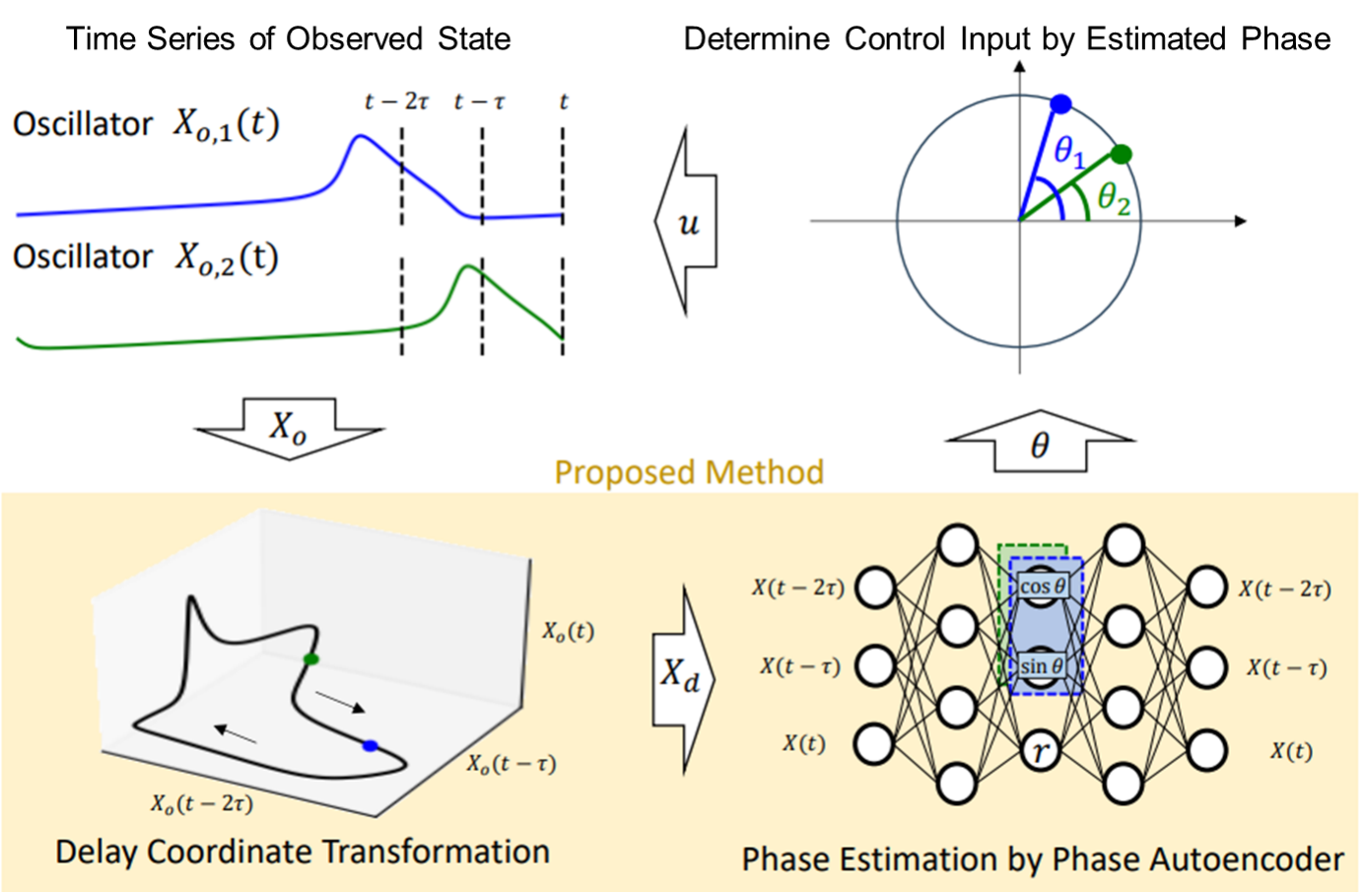}
\caption{Outline of this study.
The observed variables of the oscillator under partial observation are transformed into delay coordinates.
Then, the phase is estimated using a phase autoencoder for delay coordinates.
As an application, we synchronize two oscillators by feedback control using the estimated phase values.
}
\label{fig:CONCEPT} 
\end{figure}

Many control methods based on phase reduction,
such as those for mutual synchronization~\cite{namura2024optimal},
assume that the phase of the oscillator is available.
In real-world problems,
detailed mathematical models are often unavailable, 
and the asymptotic phase needs to be estimated in real time from time-series data as discussed in~\cite{netoff2012experimentally,Namura2022,kralemann2008phase,yawata2024phase,yamamoto2025gaussian,yawata2025phase}.
Although many studies assume that complete state variables can be observed,
it is often the case that not all the state variables can be observed and the asymptotic phase cannot be determined as a function of the complete state variables.
Therefore, it is necessary to develop a method for estimating the asymptotic phase of the oscillator
from time series under partial observation.
One typical method is to use 
the Hilbert transform to estimate the asymptotic phase from time series~\cite{matsuki2023extended}.
%
However, this method cannot be used for real-time control because it uses the entire time-series data to calculate the Hilbert transform.

In this study, we propose a machine-learning-based method for estimating the phase of a limit-cycle oscillator in real time under partial observation by extending the phase autoencoder proposed in~\cite{yawata2024phase}, 
which is an autoencoder-based method to characterize the oscillator state in the latent space by the asymptotic phase
(Figure~\ref{fig:CONCEPT}).
\INS{Compared to our previous study~\cite{yawata2024phase}, in which we could observe all the system variables,}
our present method introduces delay coordinates to the phase autoencoder to achieve phase estimation from time series data under partial observation.

This paper is organized as follows. 
In Sec.~\ref{sec:phase_reduction}, we briefly explain 
phase reduction theory.
In Sec.~\ref{sec:phase_autoencoder}, we review the phase autoencoder and present a method for estimating the oscillator phase using delay coordinates.
In Sec.~\ref{sec:control}, we evaluate the effect of a weak input on the estimation accuracy of the phase and present a feedback control method for mutual synchronization of two oscillators.
In Sec.~\ref{sec:experiment}, we verify the proposed method using numerical examples, and Sec.~\ref{sec:conclusions} concludes the paper.


\section{Phase reduction}\label{sec:phase_reduction}

We consider a limit-cycle oscillator with a $d_X$-dimensional state vector ${\bm X}(t) \in \mathbb{R}^{d_X}$ described by the following ordinary differential equation (ODE):
\begin{equation}\label{eq:main_ode}
    \dot{\bm X}(t) = {\bm F}({\bm X}(t)),
\end{equation}
where $t$ is the time and ${\bm F} : {\mathbb R}^{d_X} \to  {\mathbb R}^{d_X}$ is a smooth vector field that represents the oscillator dynamics. 
We assume that Eq.~\eqref{eq:main_ode} has an exponentially stable limit-cycle solution $\tilde{\bm{\mathcal{X}}}(t)$ with period $T$ and natural frequency $\omega = 2\pi/T$, satisfying $ \tilde{\bm{\mathcal{X}}}(t + T) = \tilde{\bm{\mathcal{X}}}(t) $.

To characterize the oscillator state ${\bm X}$, we introduce the asymptotic phase function $\Theta: A \to [0, 2\pi)$, where $A \subseteq {\mathbb R}^{d_X}$ is the basin of the limit cycle.
First, the phase of a state $\tilde{\bm{\mathcal{X}}}(t)$ on the limit cycle is defined as
\begin{equation}\label{eq:ph1}
    \Theta(\tilde{\bm{\mathcal{X}}}(t)) = \omega t \mod 2\pi,
\end{equation}
where $\tilde{\bm{\mathcal{X}}}(0)$ is chosen as the origin of the phase.
By this definition, a phase value between $0$ and $2\pi$ is assigned to the state $\tilde{\bm{\mathcal{X}}}(t)$ on the limit cycle, which increases constantly over $t$, i.e.,
\begin{equation}\label{eq:phase_fuction}
    \frac{d}{dt}\Theta(\tilde{\bm{\mathcal{X}}}(t)) = \omega,
\end{equation}
where the phase values $0$ and $2\pi$ are considered identical.
In what follows, we denote the state 
with the phase $\theta$ on the limit cycle as
\INS{${\bm {\mathcal{X}}}(\theta)=\tilde{\bm{\mathcal{X}}}(\theta/\omega)$}
, which is a $2\pi$-periodic function satisfying ${\bm {\mathcal{X}}}(\theta + 2\pi) = {\bm {\mathcal{X}}}(\theta)$.
Next, the definition of the phase is extended to the basin $A$, where the phase of the state ${\bm X} \in A$ is defined as $\theta$
if the state $\bm{X}$ converges to ${\bm {\mathcal{X}}}(\theta)$ over time.
By the above definition, the phase function $\Theta$ satisfies
\begin{equation}\label{eq:phase_fuction2}
    \frac{d}{dt}\Theta({\bm X}(t))
    = \text{grad}_{\bm X} \Theta({\bm X}) \cdot {\bm F}({\bm X})
    = \omega
\end{equation}
for any state ${\bm X}$ in $A$ obeying Eq.~(\ref{eq:main_ode}),
where $\text{grad}_{\bm X}$ represents the gradient at $\bm X$ and ``$\cdot$'' represents the scalar product of two vectors. 
We can define the phase value of the oscillator state $\bm{X} \in A$ as $\theta = \Theta(\bm{X})$.

The gradient of the phase function evaluated on the limit cycle is called the phase sensitivity function (PSF) $\bm Z: [0, 2\pi) \to {\mathbb R}^{d_X}$ and is defined as
\begin{gather}\label{eq:psf}
    {\bm Z}(\theta)=\text{grad}_{{\bm X}={\bm {\mathcal{X}}}(\theta)}\Theta({\bm X}).
\end{gather}
The PSF characterizes the linear response of the oscillator phase to a small perturbation given to the oscillator state with phase $\theta$ on the limit cycle.
The PSF can be calculated as a $2\pi$-periodic solution of the adjoint equation~\cite{nakao2016phase,Ermentrout2010}:
\begin{align}
\omega \frac{d}{d\theta} \bm{Z}(\theta) = -\bm{J}(\bm{\mathcal{X}}(\theta))^{\top} \bm{Z}(\theta),
\end{align}
where $\bm{J}(\bm{\mathcal{X}}(\theta))$ is the Jacobian matrix of the vector field $\bm{F}$ at the state 
$\bm{\mathcal{X}}(\theta)$ on the limit cycle and $\top$ represents matrix transpose.

\INS{From Eq.~(\ref{eq:phase_fuction2}),} the following normalization condition for the PSF should also be satisfied:
\begin{align}
\bm{Z}(\theta)\cdot \bm{F}(\bm{\mathcal{X}}(\theta))=\omega.
\end{align}

When the oscillator receives a sufficiently weak control input $\bm{u}(t) \in \mathbb{R}^{d_X}$ \INS{, which can be regarded as a perturbation, and it is described} as
\begin{equation}
    \dot{{\bm X}}(t) = {\bm F}({\bm X}(t)) + \bm{u}(t),
\end{equation}
the oscillator dynamics can be approximately reduced to a phase equation,
%
\begin{align}
    \dot{\theta}(t) = \omega + \bm{Z}(\theta) \cdot \bm{u}(t),
\end{align}
by phase reduction.


\section{Phase autoencoder for delay coordinates}\label{sec:phase_autoencoder}
\subsection{Phase autoencoder}

For unknown limit cycles, the phase of the oscillator should be estimated from time-series data. 
Phase autoencoder~\cite{yawata2024phase} is a method for estimating phase dynamics of limit-cycle oscillators, composed of an encoder $f_{enc}$ and a decoder $f_{dec}$, similar to regular autoencoders.
Phase autoencoder is trained so that the encoder embeds the system state in the latent space whose latent variables can be directly interpreted as the asymptotic phase and the deviation from the limit cycle, and the decoder reconstructs the original state from these latent variables.

The phase autoencoder has two principal loss functions.
One principal loss function is the reconstruction loss:
\begin{gather}
    L_{recon} = {\mathbb E} \left[ \left\|{\bm X}_t - f_{dec}(f_{enc}({\bm X}_t)) \right\|^2 \right],
    \label{eq:reconloss}
\end{gather}
where $\| \cdot \|$ represents the Euclidean norm, ${\bm X}_t$ represents the input data, and $\mathbb{E}$ represents the expectation with respect to the input data. 
In addition to the reconstruction loss, 
another principal loss function, the dynamical loss, is used to constrain the time evolution in the latent space.
We express the latent variable as a three-dimensional vector ${\bm Y}_t=[Y_{1,t}\;Y_{2,t}\;Y_{3,t}]^\top=f_{enc}({\bm X}_t)$.
%
Here, $Y_{1,t}$ and $Y_{2,t}$ are normalized so that $Y_{1,t}^2+Y_{2,t}^2=1$ in the encoding process.
The dynamical loss is:
\begin{gather}
    \footnotesize
     L_{dyn} = {\mathbb E} \left[ \left\|{\bm Y}_{t+\Delta t} -
     {\setlength{\arraycolsep}{1pt}\begin{bmatrix}
    \cos \omega \Delta t & -\sin \omega \Delta t & 0\\
     \sin \omega \Delta t & \cos \omega \Delta t & 0\\
    0 & 0 & e^{-\lambda\Delta t}
    \end{bmatrix}}{\bm Y}_t \right\|^2 \right],
\end{gather}
where $\omega$ and $\lambda$ are also the parameters to be learned,
corresponding to the natural frequency and the decay rate of the oscillator, respectively.
The time step $\Delta t$ of the input data is fixed in advance. 
This loss function requires that $Y_{1,t}$ and $Y_{2,t}$ evolve in the $Y_{1}$-$Y_{2}$ plane at a constant angular velocity $\omega$, and $Y_{3,t}$ approaches 0 asymptotically over time,
\INS{where $Y_{3,t}$ can be regarded as corresponding to the amplitude of deviation from the limit cycle.}
In this way, the unit circle 
in the $Y_{1}$-$Y_{2}$ plane corresponds to the limit cycle, 
and the two latent variables $Y_{1,t}, Y_{2,t}$ represent the asymptotic phase
as $\theta_t=\text{arctan}(Y_{1,t}, Y_{2,t})$.
Thus, the encoder $f_{enc}$ gives an approximation of the asymptotic phase function.
\INS{In addition to these two loss functions, an auxiliary loss function is used during training to stabilize it.
The training parameters are given in~\cite{yawata2024phase}.}

\subsection{Delay coordinate phase function}

In real-world systems, it is often the case that only a part of the oscillator state can be observed, 
which prevents the straightforward application of the phase autoencoder that requires the input data 
of the complete state variables.
In this study, we introduce a phase function in delay coordinates, which enables to estimate the asymptotic phase of the oscillator under partial observation. 
The phase autoencoder 
framework is thus generalized for the case with delay coordinates, where the complete state variables in the \INS{delay coordinates} are used as the input.

Let ${\bm X_o}\in {\mathbb R}^{d_{X_o}}$ be the observed variables and ${\bm X_h}\in {\mathbb R}^{d_{X_h}}$ be the hidden state.
The state variables are sorted so that the system state can be represented as
\begin{equation}
    {\bm X}(t) = \begin{bmatrix}
    {\bm X_o}(t) \\
    {\bm X_h}(t)
    \end{bmatrix}
    ,
\end{equation}
where $d_{X_o}$ and $d_{X_h}$ are the dimensions of the observed space and the hidden space, respectively, satisfying $d_{X_o} + d_{X_h} = d_X$.
Similarly, we assume that the PSF $\bm{Z}$ can be represented by \INS{$\bm{Z}_o\in {\mathbb R}^{d_{X_o}}$ and $\bm{Z}_h\in {\mathbb R}^{d_{X_h}}$} corresponding to the observed and hidden state, respectively, as follows:
\begin{equation}
    {\bm Z}(\theta) = \begin{bmatrix}
    {\bm Z_o}(\theta) \\
    {\bm Z_h}(\theta)
    \end{bmatrix}
    .
\end{equation}
In this study, we assume that ${\bm Z_o}(\theta)$ is not constantly zero, that is, the oscillator responds to the input given to ${\bm X}_o$.

Although it is impossible to obtain the asymptotic phase only from the current observed state ${\bm{X}_o}$, it is possible to obtain the asymptotic phase by using delay embedding.
We call a limit cycle in the delay coordinate system a DC limit cycle $\bm{\mathcal{X}}_{dc}(\theta)$,
which is expressed by 
\begin{align}
    \bm{{\mathcal{X}}}_{dc}(\theta) = \left[
        {{\mathcal{X}}}_{o}(\theta) \;
        {{\mathcal{X}}}_{o}\left( \theta-\frac{\omega}{K}\tau \right) \;
        \cdots \;
        {{\mathcal{X}}}_{o}(\theta-\omega\tau))
    \right]^\top
    ,
\end{align}
where $\mathcal{X}_o(\theta) \in \mathbb{R}$ is the observed variable 
of the state $\bm{\mathcal{X}}(\theta)$ on the limit cycle.
Here, $\tau$ is the maximum time delay and $K$ is the number of time steps for the delay used for the delay coordinate.
Next, we assume that the asymptotic phase of the oscillator in the delay coordinate system is defined as
\begin{align}
    \Theta({\bm X}(t)) &= \Theta_H \left( {\bm X}_{o}(t),{\bm X}_{o}\left(t-\frac{1}{K}\tau\right),\cdots, {\bm X}_{o}(t-\tau) \right).
\end{align}
We call 
$\Theta_H$ the delay coordinate phase function (DCPF).
Furthermore, we assume that the gradient function ${\bm g}(\theta) = [g_0\;g_1\cdots g_K]^\top = \text{grad}_{\bm{{\mathcal{X}}}_{dc}(\theta)} \Theta_H$ can be calculated and each component is $O(1)$ in the vicinity of the DC limit cycle\INS{, which is typical for the oscillators observed in practical setups}.

The proposed phase autoencoder takes the input data from the reconstructed system states in delay coordinates and is trained in the same way as the original phase autoencoder.


\section{Feedback control of the oscillator} \label{sec:control}
\subsection{Phase estimation under weak control inputs}

In feedback control of unknown oscillators, the phase must be estimated to determine the feedback, while the control input is being applied to the oscillator.
Therefore, we need to examine the effect of the control input on the accuracy of phase estimation, because the learning of DCPF assumes no control input.

In what follows, we consider the case that only one variable $X_{o}^{u}$ can be observed, i.e., $d_{X_o}=1$, and the control input $u_o(t) \in \mathbb{R}$ is applied only to this variable:
\begin{align}
\begin{aligned}
    \dot{\bm X}(t) &= {\bm F}({\bm X}(t)) +{\bm u(t)},\\
    {\bm u(t)}&=\begin{bmatrix}
    {u_o}(t) \\
    {\bm 0}
    \end{bmatrix}
    .
\label{eq:main_ode_adu}
\end{aligned}
\end{align}

Since DCPF is learned under the assumption without the input control,
we need to examine the estimation error of DCPF under the control input 
when the oscillator state is near the limit cycle.
If 
${\mathcal{X}}_{o}(\theta)=X^u_o(t)$, {then} ${\mathcal{X}}_{o}(\theta-\frac{k\omega}{K}\tau)=X^u_o(t-\frac{k}{K}\tau)$ $(k=1,\cdots K)$ is also expected, but this is not satisfied under the influence of the control input.
The resulting shift $\Delta X_k={\mathcal{X}}_{o}(\theta-\frac{k\omega}{K}\tau)-X_{o}^{u}(t-\frac{k}{K}\tau)$ causes the phase estimation error $\Delta \theta$.

\begin{thm}
    When the oscillator state is sufficiently close to the limit cycle and the intensity $\epsilon$ of the control input $u_o$ is sufficiently small,
    the DCPF estimation error $\Delta \theta$ is of $O(\epsilon K)$
    with respect to the dimension of the delay coordinate system $K$.
\end{thm}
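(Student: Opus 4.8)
The plan is to linearize the DCPF about the DC limit cycle and let the summation over the $K+1$ delay components supply the factor $K$. First I would fix the reference phase $\theta$ by the matching condition $\mathcal{X}_o(\theta) = X_o^u(t)$ stated before the theorem, so that the zeroth shift vanishes, $\Delta X_0 = 0$, and so that the on-cycle delay vector satisfies $\Theta_H(\bm{\mathcal{X}}_{dc}(\theta)) = \theta$ by the very definition of the DCPF. Writing the controlled observed delay vector as $\bm{X}_{dc}^u(t) = [\,X_o^u(t)\; X_o^u(t-\tfrac{\tau}{K})\cdots X_o^u(t-\tau)\,]^\top$, the near-limit-cycle and weak-input assumptions place $\bm{X}_{dc}^u(t)$ close to $\bm{\mathcal{X}}_{dc}(\theta)$, so a first-order Taylor expansion of $\Theta_H$ with gradient $\bm g(\theta)=[g_0\cdots g_K]^\top$ gives
\begin{align}
\Delta\theta \;=\; \Theta_H\!\left(\bm{X}_{dc}^u(t)\right) - \theta \;\approx\; -\sum_{k=1}^{K} g_k\,\Delta X_k ,
\end{align}
where each $g_k = O(1)$ by assumption. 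This reduces the claim to a uniform bound on the individual shifts $\Delta X_k$ together with the counting of terms.

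Next I would estimate $\Delta X_k = \mathcal{X}_o(\theta-\tfrac{k\omega}{K}\tau) - X_o^u(t-\tfrac{k}{K}\tau)$ via phase reduction. Denoting by $\phi(t')$ the asymptotic phase of the controlled trajectory, the weak input of intensity $\epsilon$ keeps the state $O(\epsilon)$ away from the cycle, so $X_o^u(t') = \mathcal{X}_o(\phi(t')) + O(\epsilon)$. Integrating the reduced phase equation $\dot\phi = \omega + Z_o(\phi)\,u_o$ over the delay window $s_k = \tfrac{k}{K}\tau$ yields
\begin{align}
\phi(t - s_k) = \phi(t) - \omega s_k + O(\epsilon s_k),
\end{align}
while the matching condition gives $\theta = \phi(t) + O(\epsilon)$. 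Substituting these into $\Delta X_k$ and expanding $\mathcal{X}_o$ to first order shows that both the phase-mismatch contribution (scaling as $O(\epsilon s_k)$ with $s_k \le \tau$ fixed) and the amplitude contribution are $O(\epsilon)$, hence $\Delta X_k = O(\epsilon)$ uniformly in $k$.

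Combining the two estimates closes the argument: the linearized error is a sum of $K$ terms, each an $O(1)$ gradient component times an $O(\epsilon)$ shift, so $\Delta\theta = \sum_{k=1}^K O(\epsilon) = O(\epsilon K)$, which is the assertion. (Equivalently, retaining only the accumulated phase mismatch $O(\epsilon s_k)$ gives $\sum_{k} O(\epsilon\,\tfrac{k}{K}\tau) = O(\epsilon\tau K)$, reproducing the same order and exhibiting $K$ as the number of delay taps across a fixed window.)

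The hard part will be justifying the uniform $O(\epsilon)$ bound on $\Delta X_k$ and the validity of the linearization as $K$ grows. One must show that the phase deviation accumulated over the delay window does not itself grow with $K$ — which is plausible because the window length $\tau$ is fixed while only the tap spacing $\tau/K$ shrinks — and that the amplitude deviation, handled through phase–amplitude reduction, stays $O(\epsilon)$ at every past time. The matching condition also requires $\mathcal{X}_o'(\theta)\neq 0$ so that $\theta$ is pinned down to $O(\epsilon)$, and the second-order Taylor remainder of $\Theta_H$ must be shown negligible, which needs the Hessian of $\Theta_H$ to remain controlled (consistent with the $g_k = O(1)$ assumption) as the dimension $K$ increases.
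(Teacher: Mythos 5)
Your proof takes essentially the same route as the paper's: a first-order expansion of $\Theta_H$ about the DC limit cycle, an $O(1)$ bound on each gradient component $g_k$, an $O(\epsilon)$ bound on each shift $\Delta X_k$, and a count of the $K+1$ delay taps. The only difference is that you actually justify $\Delta X_k = O(\epsilon)$ by integrating the reduced phase equation over the fixed delay window, a step the paper simply asserts, so your version is a slightly more careful rendering of the same argument.
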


\begin{proof}
When the intensity of the control input is $\epsilon$,
the shift $\Delta X_k$ is of $O(\epsilon)$.
The estimation error $\Delta \theta$ is given by
\begin{align}
\begin{aligned}
    \Delta \theta &=  \Theta_H \left( X_{o}(t),X_{o}\left(t-\frac{1}{K}\tau\right),\cdots, X_{o}(t-\tau) \right)\\
    &-\Theta_H \left(
        {{\mathcal{X}}}_{o}(\theta) \;
        {{\mathcal{X}}}_{o}\left( \theta-\frac{\omega}{K}\tau \right) \;
        \cdots \;
        {{\mathcal{X}}}_{o}(\theta-\omega\tau)\right).
\end{aligned}
\end{align}
In the close vicinity of the DC limit cycle,
\begin{align}
\begin{aligned}
|\Delta \theta| &= \left| \sum_{k=0}^{K} g_k \Delta X_k \right|+ O(\epsilon^2)\\
&\leq (K+1) g_{\text{Max}} \Delta X_{\text{Max}}+ O(\epsilon^2),
\end{aligned}
\end{align}
where $g_{\text{Max}}$ is the maximum absolute value within all the components of $\bm{g}(\theta)$ and 
$\Delta X_{\text{Max}}$ is the maximum value among $[\Delta X_0 \;\Delta X_1\cdots,\Delta X_K]$.
This proves Theorem 1, because we assume that $g_{\text{Max}}$ is of $O(1)$
and $\Delta X_{\text{Max}}$ is of $O(\epsilon)$, and the higher-order terms can be ignored.
\end{proof}

\subsection{Feedback control using DCPF for synchronization}

As an application, we consider a simple feedback control using DCPF for synchronization.
We consider two oscillators with identical properties whose states are denoted as $\bm{X}_{1}$ and ${\bm X}_{2}$, and assume that the control input $\bm u$ is applied only to the observed variable 
$X_{o,1}$ or $X_{o,2}$ of each oscillator as follows:
\begin{align}
    \dot{\bm X}_1 &= {\bm F}({\bm X}_1) +\begin{bmatrix}
    u_{o,1}(t) \\
    {\bm 0}
    \end{bmatrix},
    \\
    \dot{\bm X}_2 &= {\bm F}({\bm X}_2) +\begin{bmatrix}
    u_{o,2}(t) \\
    {\bm 0}
    \end{bmatrix}, \label{eq:sync}
\end{align}
where ${u}_{o,1}, {u}_{o,2} \in {\mathbb R}$ are the control inputs for synchronization.
We take these feedback control inputs as
\begin{align}
    u_{o,1} = \epsilon \sin (\theta_2 - \theta_1)Z_o(\theta_1),\cr
    u_{o,2} = \epsilon \sin (\theta_1 - \theta_2)Z_o(\theta_2).
    \label{eq:control}
\end{align}
By phase reduction, the dynamics of the phases $\theta_{1,2}$ are approximately given by 
\begin{align}
    \dot{\theta}_1(t) &= \omega + Z_o(\theta_1) u_{o,1}(t), \\
    \dot{\theta}_2(t) &= \omega + Z_o(\theta_2) u_{o,2}(t).
\end{align}

The control objective is to achieve in-phase synchronization of the two oscillators, i.e., to achieve $\theta_1(t) = \theta_2(t)$.
If we take the input as in Eq.~\eqref{eq:control}, 
the dynamics of the phase difference $\phi(t) = \theta_1(t) - \theta_2(t)$ can be represented as 
\begin{align}
\begin{aligned}
    \dot{\phi}(t) 
    &=u_{o,1}(t) Z_o(\theta_1(t)) - u_{o,2}(t) Z_o(\theta_2(t))\\
    &=\epsilon \sin (\theta_2(t) - \theta_1(t))Z_{o}^2(\theta_1(t)) \\
    &\qquad- \epsilon \sin (\theta_1(t) - \theta_2(t))Z_{o}^2(\theta_2(t))\\
    &=- \epsilon \sin \phi(t) (Z_{o}^2(\theta_1(t))+Z_{o}^2(\theta_2)(t)).
    \label{eq:phase_difference}
\end{aligned}
\end{align}
Since $Z_0^2(\theta_1)+Z_0^2(\theta_2) \geq 0$ and not constantly zero, $\phi(t)$ converges to $0$ and the two oscillators eventually exhibit in-phase synchronization.

In practice, we estimate the asymptotic phases of the oscillators by DCPF as 

\begin{gather}
    \hat{\theta_i} = \Theta_H \left(X_{i,o}^{u}(t),X_{i,o}^{u}\left(t-\frac{1}{K}\tau\right),\cdots X_{i,o}^{u}(t-\tau) \right)
\end{gather}
for $i = 1, 2$,
and determine the control input based on the estimated phases in a data-driven manner as follows:
\begin{gather}
    u_{o,1} = \epsilon \sin (\hat{\theta}_2 - \hat{\theta}_1)Z_o(\hat{\theta}_1),\\
    u_{o,2} = \epsilon \sin (\hat{\theta}_1 - \hat{\theta}_2)Z_o(\hat{\theta}_2).
\end{gather}
We first estimate the phase of the oscillator by DCPF from the time series data of the observed variables and then determine the feedback control input at each time step.


\section{Numerical Experiments}\label{sec:experiment}

In this section, we verify the validity of the present method for estimating DCPF by numerical simulations of two types of limit-cycle oscillators as examples.
We also show that the present method of feedback control under partial observation is useful for realizing mutual synchronization.
For both oscillators,
we perform
direct numerical simulations 
using the same procedures and parameters 
as in~\cite{yawata2024phase};
the time-series data are obtained by evolving the system over time $3T$ from randomly selected five hundred initial points.
\INS{In addition, $K$ was set so that the dimension of the delay-coordinate system matched that of the original system, and $\tau$ was set to about $T/20$ in each experiment.}

\subsection{Stuart-Landau oscillator}

First, we apply our method to the Stuart-Landau (SL) oscillator~\cite{Kuramoto2003},
which is a normal form of the supercritical Hopf bifurcation described by
\begin{gather}
\begin{bmatrix}
\dot{x}_1\\
\dot{x}_2\\
\end{bmatrix}=
\begin{bmatrix}
x_1 - \alpha x_2 - (x_1-\beta x_2)(x_1^2+x_2^2)\\
\alpha x_1 + x_2 - (\beta x_1+x_2)(x_1^2+x_2^2)\\
\end{bmatrix}.
\end{gather}
Here, $\bm{X} = [x_1\;x_2]^{\top}$ is the state vector and $(\alpha, \beta)$ are the system parameters.
The limit-cycle solution is $\tilde{\bm{\mathcal{X}}}_{0}(t) = [\cos \omega t\; \sin \omega t]^{\top}$, which is a unit circle on the $x_1$-$x_2$ plane.
The natural frequency is given by $\omega = \alpha - \beta$ and the oscillation period is $T = 2\pi/(\alpha - \beta)$.

We conducted 
numerical experiments under the assumption that only $x_1$ is observed.
We set the parameters as $\alpha=2\pi$ and $\beta=1$, which yields $T \simeq 1.19$.
We trained the autoencoder and then estimated the DCPF of the SL oscillator.
We set the number of time steps for the delay coordinate as $K=1$
and the maximum time delay as $\tau=0.4$.

First, we examined the accuracy of estimating the asymptotic phase in delay coordinates by the proposed autoencoder.
The true asymptotic phase function of the SL oscillator is analytically given by~\cite{nakao2016phase}
\begin{gather}
    \Theta(x_1, x_2) = \text{arctan}\left(x_2/x_1 \right) - \log \sqrt{x_1^2+x_2^2}
\end{gather}
if we choose $(x_1, x_2) = (1, 0)$ as the state of the origin of the phase.
Figure~\ref{fig:SL_PF} compares the estimated DCPF $\hat{\theta}$ with the true phase $\theta$.
We confirm that DCPF is reproduced well around the limit cycle.
For the oscillator states far from the limit cycle, the estimation accuracy  degraded due to insufficient data points.

\begin{figure}[t]
\includegraphics[scale=0.27]{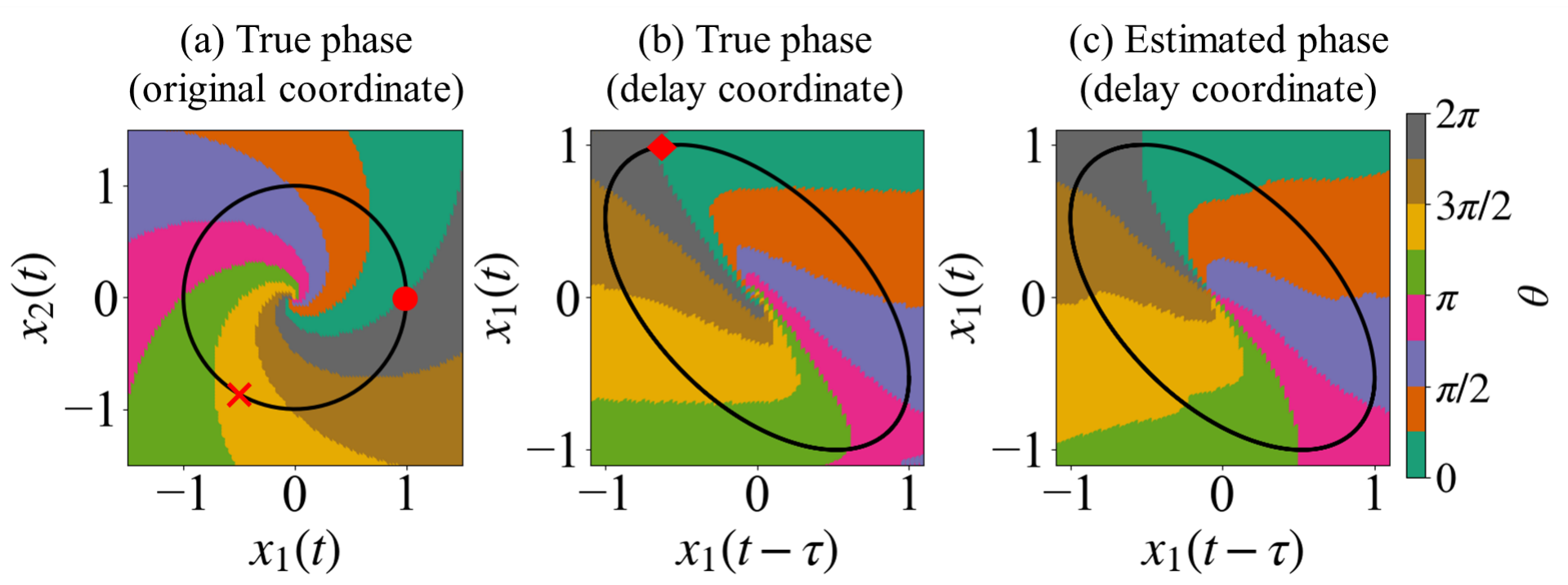}
\caption{Asymptotic phase of the SL oscillator in the original coordinates and delay coordinates.
(a) True phase function in the original coordinates. 
(b) True phase function in the delay coordinates. 
(c) Estimated DCPF by the proposed autoencoders.
The colors represent the phase from $0$ to $2\pi$ (discretized for visual clarity), where $(x_1,x_2)=(1,0)$ is chosen as the origin of the phase with $\theta=0$ (red circle).
The red cross symbol represents $\bm{\mathcal{X}}(-\omega\tau)$, and the red diamond symbol represents $\bm{\mathcal{X}}_{dc}(0)$.
The black line in each figure represents the true limit cycle.
}
\label{fig:SL_PF} 
\end{figure}

Next, we evaluated the phase estimation error caused by the control input to illustrate the theorem.
We used a periodic input represented as $u(t)=\epsilon \sin(\omega_ut)$.
We set $\omega_u$ as $1.05 \times \omega$, for which the oscillator was expected to synchronize with the input,
because the frequency mismatch between the oscillator and the periodic input is relatively small. 
Figure~\ref{fig:SL_p} shows the estimated phase for two different values of $\epsilon$. 
In Figs.~\ref{fig:SL_p}(a) and (b), the black dashed lines represent the true phase calculated analytically using both $x_1$ and $x_2$, 
while the red lines represent the estimated results.
Figure~\ref{fig:SL_p}(c) represents the time average of the absolute phase estimation error.
It can be seen that the error ${\mathbb E}[|\Delta \theta|]$ becomes smaller and almost converges to zero as $\epsilon$ is decreased.

\begin{figure}[t]
\includegraphics[scale=0.48]{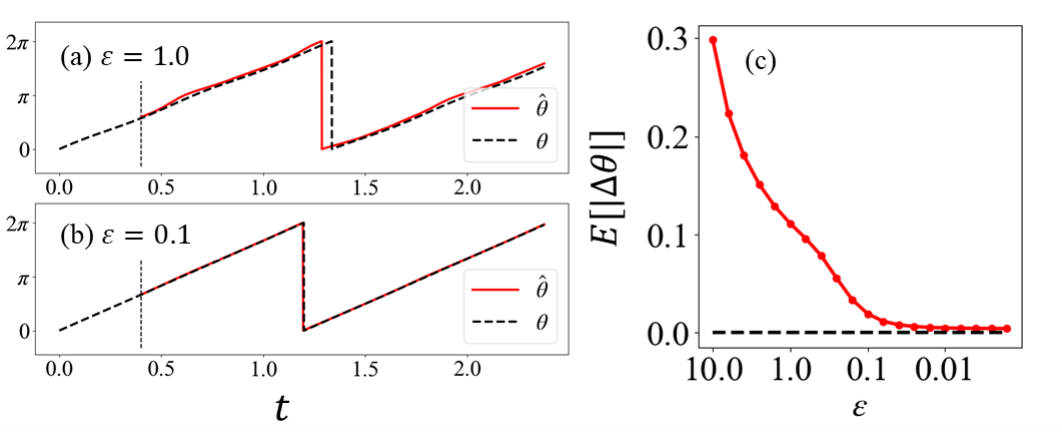}
\caption{
Impact of control inputs on phase estimation over time.
(a) and (b) show the estimated phase (red solid curve) and the true phase (black dashed curve) for $\epsilon=1.0$ and $0.1$, respectively.
Figure (c) shows the phase estimation error for different values of $\epsilon$.
}
\label{fig:SL_p} 
\end{figure}

\subsection{Hodgkin-Huxley model}

As the second example, we applied the proposed method to the Hodgkin-Huxley (HH) model of spiking neurons~\cite{hodgkin1952quantitative},
given by
\begin{gather}
\frac{d}{dt}\begin{bmatrix}
V\\
m\\
h\\
n\\
\end{bmatrix}=
\begin{bmatrix}
(G_{Na}m^3h(E_{Na}-V) + G_Kn^4(E_K-V) \\+ G_L(E_L-V) + I)/C\\
\alpha_m(V)(1.0-m)-\beta_m(V)m\\
\alpha_h(V)(1.0-h)-\beta_h(V)h\\
\alpha_n(V)(1.0-n)-\beta_n(V)n\\
\end{bmatrix},
\end{gather}
where $V$ is the membrane potential, $(m, h, n)$ are the channel variables,
and the voltage-dependent rate constants are given by
\begin{align}
\begin{aligned}
\alpha_m(V) &= 0.1 (V+40)(1-\text{exp}(-(V+40)/10)), \\
\beta_m(V) &= 0.1 \text{exp}(-(V+65)/18), \\
\alpha_h(V) &= 0.07 \text{exp}(-(V+65)/20), \\
\beta_h(V) &= 1/(1+\text{exp}(-V+35)/10)), \\
\alpha_n(V) &= 0.01 (V+55)/(1-\text{exp}(-(V+55)/10)), \\
\beta_n(V) &= 0.125 \text{exp}(-(V+65)/80).
\end{aligned}
\end{align}
We set the parameters as $C=1.0, G_{N_a}=120.0, G_K=36.0, G_L=0.3, E_{N_a}=50.0, E_K=-77.0$, and $E_L=-54.4$,
at which the HH model possesses a stable limit cycle with period $T \simeq 10.12$.
\INS{Here, $V$ is the observable physical quantity that can be observed in a practical setting.}
We conducted the numerical experiment assuming that only $V$ is observed.
We set the time steps for the delay as $K=3$
and the maximum time delay as $\tau=1.5$.

First, we examined whether the correct phase values were assigned to the oscillator states on the limit cycle.
We sampled the oscillator state $\bm{\mathcal{X}}_{dc}(\theta)$ 
on the limit cycle
represented in delay coordinates. 
We then estimated the phase values of the sampled 
states by the autoencoder to confirm if 
DCPF is properly embedded.
Figure~\ref{fig:HH_LC}(a) compares the 
true phase $\theta$ and the estimated phase $\hat{\theta}$ from $\bm{\mathcal{X}}_{dc}(\theta)$, showing good agreement.
We also verified that the limit cycle is appropriately embedded in the delay coordinates. 
Figure~\ref{fig:HH_LC}(b) compares the original DC limit cycle and the decoder's output on the $V(t)$-$V(t-\tau)$ plane, which is the projection of the reconstructed state in the delay coordinate, showing good agreement.

\begin{figure}[t]
\centering
\includegraphics[scale=0.26]{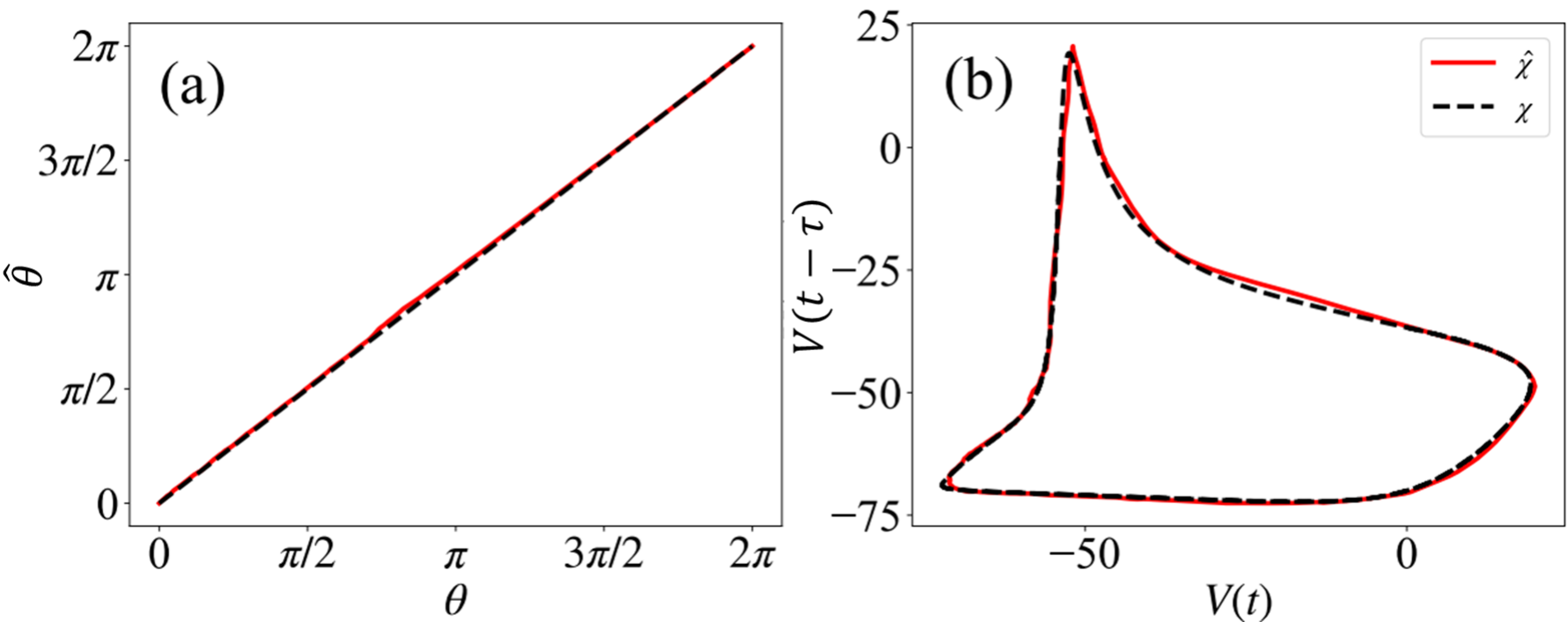}
\caption{
(a) Comparison of the estimated phase (red solid line) $\hat{\theta}$ and the true phase (black dotted line) $\theta$ of the HH oscillator.
(b) Reconstructed DC limit cycle by the autoencoder $\hat{\bm{\mathcal{X}}}_{dc}$ (red solid curve) compared with the true DC limit cycle $\bm{\mathcal{X}}_{dc}$ (black dotted curve) in $V(t)$-$V(t-\tau)$ coordinates. 
}
\label{fig:HH_LC} 
\end{figure}

Next, we tested the feedback control using the estimated DCPF. 
We estimated the phase only from $V$
and determined the control input 
using the estimated phase and PSF to 
synchronize the two HH oscillators.
The true PSF was calculated by the adjoint equation.
Since, in practice, the adjoint equation is unavailable for unknown limit-cycle oscillators under partial observation, the PSF should be obtained using the direct method, i.e., by perturbing the target oscillator by impulses.
In this numerical experiment, the PSF was calculated using complete states ${\bm X}(t)$ and the adjoint method.

Figure~\ref{fig:HH_SYNC}(a) and (b) show the observed $V_{1,2}$ 
and the estimated phases, respectively.
Figure~\ref{fig:HH_SYNC}(c) shows the control inputs, and Fig. \ref{fig:HH_SYNC}(d) shows the estimated phase difference and the true phase difference from the initial state ($\phi(0)\simeq 4/5\pi$).
Unlike the SL model, 
the true phase value of the oscillator state cannot be obtained analytically for
the HH model.
Here, we used the phase estimated by the phase autoencoder using all state variables as the true phase.
We can confirm that, even under the feedback control input, the phase difference was accurately estimated and data-driven in-phase synchronization was achieved.

\begin{figure}[t]
\includegraphics[scale=0.35]{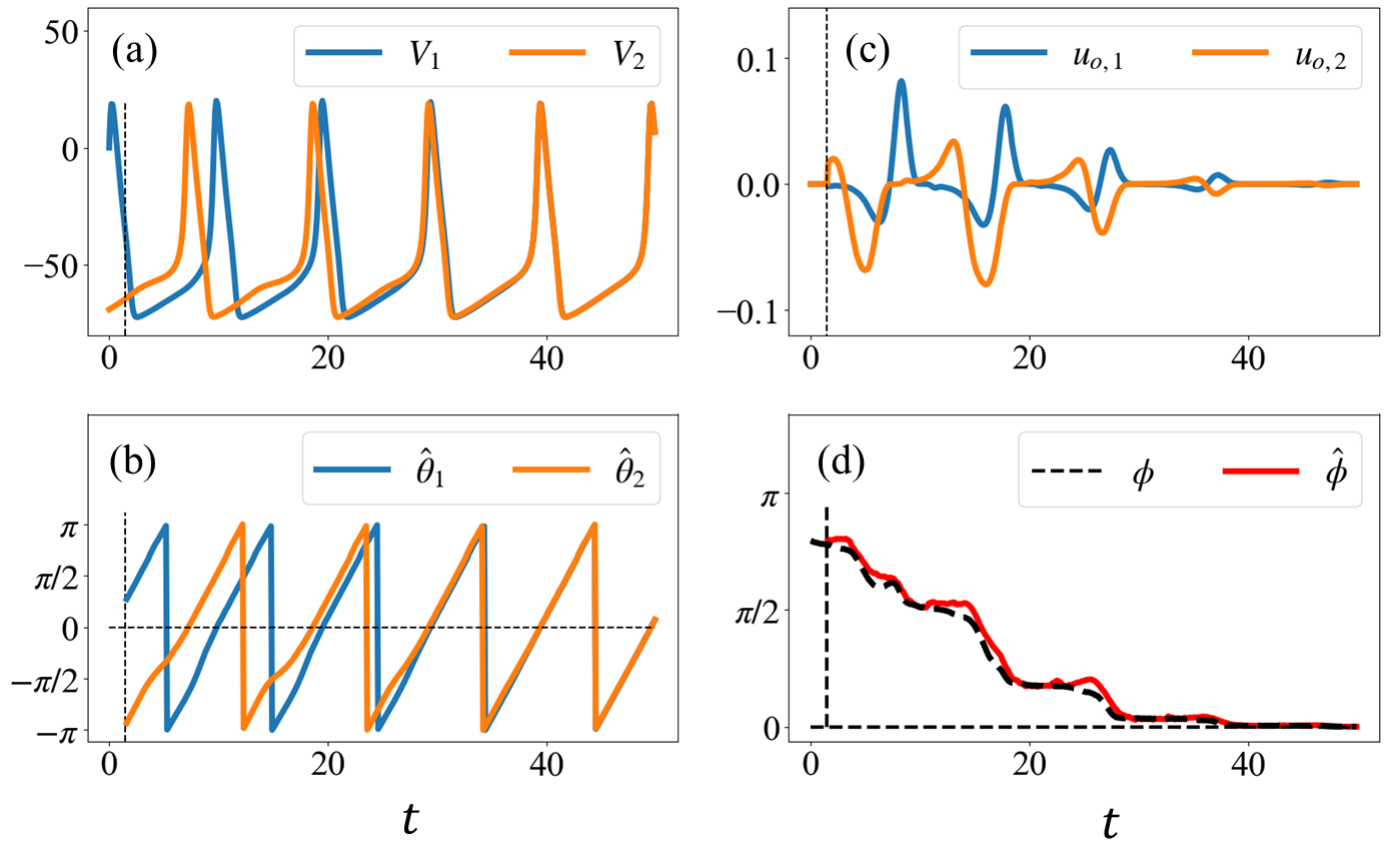}
\caption{
Feedback control for synchronization of the two HH oscillators.
(a) $V_{1}$ (blue) and $V_{2}$ (orange).
(b) Estimated phase $\hat{\theta}_1$ (blue) and $\hat{\theta}_2$ (orange).
(c) Control input $u_{o,1}$ and $u_{o,2}$.
(d) Comparison of the estimated phase difference (red solid line) and the true phase difference (black dotted line). 
}
\label{fig:HH_SYNC} 
\end{figure}


\section{Conclusions}\label{sec:conclusions}

We proposed a method for estimating the asymptotic phase of limit-cycle oscillators under partial observation in real time using delay coordinates.
We evaluated the upper bound of the phase estimation error near the limit cycle and applied the proposed method to a simple feedback control for mutual synchronization of two limit-cycle oscillators.
We verified that the proposed method correctly estimates the asymptotic phase with and without the control input, and also that the feedback control works well for synchronization.
\INS{A more detailed analysis of the impact of the phase estimation error introduced by our methods is a future work.}


\bibliographystyle{ieeetran}
\bibliography{dcph}

\end{document}